\setlist{leftmargin=*}
\numberwithin{equation}{section}
\newtheoremstyle{corsivo}
{\medskipamount}{\medskipamount}%
{\itshape}{}%
{\bfseries}{}%
{ }
{\thmname{#1}\thmnumber{\@ifnotempty{#1}{ }\@upn{#2}}%
	\thmnote{ {\bfseries(#3)}}.}%
\theoremstyle{corsivo}
\newtheorem{thm}{Theorem}[section]
\newtheorem{lemma}[thm]{Lemma}
\newtheorem{prop}[thm]{Proposition}
\newtheoremstyle{dritto}
{\medskipamount}{\medskipamount}%
{\rmfamily}{}%
{\bfseries}{}%
{ }
{\thmname{#1}\thmnumber{\@ifnotempty{#1}{ }\@upn{#2}}%
	\thmnote{ {\bfseries(#3)}}.}%
\theoremstyle{dritto}
\newtheorem{example}[thm]{Example}
\newtheorem{dfn}[thm]{Definition}
\newcommand{\ga}{\gamma}
\newcommand{\eps}{\varepsilon}
\newcommand{\ph}{\varphi}
\newcommand{\Id}{\mathds{1}}  
\newcommand{\di}{\mathrm{d}}
\newcommand{\N}{\mathbb{N}}
\newcommand{\Z}{\mathbb{Z}}
\newcommand{\R}{\mathbb{R}}
\newcommand{\C}{\mathbb{C}}
\newcommand{\BR}{\mathcal{B}(L^2(\R^d))}
\newcommand{\D}{\mathfrak{D}}
\newcommand{\A}{\mathcal{A}}
\newcommand{\Hi}{\mathcal{H}}
\newcommand{\norm}[1]{\left\| #1 \right\|}
\newcommand{\bra}[1]{\left\langle #1 \right|}
\newcommand{\ket}[1]{\left| #1 \right\rangle}
\newcommand{\set}[1]{ \left\{  #1 \right\}} 
\DeclareMathOperator{\Ran}{Ran}
\DeclareMathOperator{\supp}{supp}
\DeclareMathOperator{\dist}{dist}
\newcommand{\ie}{{\sl i.\,e.\ }}   
\newcommand{\eg}{{\sl e.\,g.\ }} 
\newcommand{\virg}[1]{``#1''}
\renewcommand{\(}{\left(}
\renewcommand{\)}{\right)}
\renewcommand{\endrmk}{\hfill $\diamond$}
\begin{document}
\title{Algebraic localization of generalized Wannier \\ bases implies
	 Roe triviality in any dimension \vspace{10pt}}
\date{}
\author{Vincenzo Rossi and Gianluca Panati}
\maketitle

\begin{abstract}
    With the aim of understanding the localization-topology correspondence for \emph{non-periodic} gapped quantum systems,  we investigate the relation between the existence of an algebraically well-localized generalized Wannier basis and the topological triviality of the corresponding projection operator.  Inspired by the work of M.~Ludewig and G.C.~Thiang, we consider the triviality of a projection in the sense of \virg{coarse geometry}, i.e. as triviality in the $K_0$-theory of the Roe $C^*$-algebra of $\R^d$.  We obtain in Theorem \ref{original} a threshold, depending on the dimension, for the decay rate of the generalized Wannier functions which implies topological triviality in Roe's sense. This threshold reduces, for $d=2$, to the almost-optimal threshold appearing in the Localization Dichotomy Conjecture.
\end{abstract}
\section{Introduction}

Wannier bases, and their generalization to non-periodic systems, have been for decades a popular tool in solid state physics, both for practical and conceptual reasons.  On the first side, the existence of a well-localized (composite) Wannier basis (WB) allows computational algorithms whose cost scales only linearly with the system size \cite{Goedecker}. 
On the conceptual side, well-localized WBs allow a theoretical understanding of several phenomena in crystalline solids as e.g. macroscopic polarization and orbital magnetization in insulators \cite{Ceresoli}.   
We refer to \cite{Marzari} for a comprehensive review about WBs for solid state systems. 

More recently, with the discovery of topological insulators, WBs gained a new role as a tool to distinguish between ordinary and Chern insulators. A correspondence between the topological non-triviality of the vector bundle of occupied states (Bloch bundle) and the non-existence of a composite Wannier basis (CWB) with finite second moment of the position operator, has been noticed and proved in \cite{Monaco}, for periodic systems in dimension $d=2$ and $d=3$. More precisely, for any gapped periodic Schrödinger operator, the following  \emph{Localization Dichotomy} holds true: 
\begin{enumerate}[label=(\roman*)]
\item either there exists a CWB which is exponentially localized, and, correspondingly, the Chern class of the Bloch bundle is zero; or 
\item any possible choice of a CWB yields an infinite expectation value of the squared position operator. Explicitly, this means 
$$
\sum_{a=1}^m \int_{\R^d} \langle x - \gamma\rangle^2 |w_{\gamma, a}(x)|^2 \, dx = + \infty    \quad \forall \gamma \in \Gamma,  
 $$  
where $w_{\gamma, a}(x)=w_a(x -\gamma)$ are the elements of a CWB, $\Gamma$ is the periodicity lattice, and $a\in\set{1,\ldots, m}$ is an index corresponding to the number $m$ of relevant Bloch bands.
\end{enumerate}

After the appearance of \cite{Monaco}, it became clear that such a profound relation between localization and topology cannot depend on periodicity in an essential way, despite the fact that both terms of the correspondence (the Bloch bundle and, respectively, the Wannier functions) are defined by exploiting periodicity. 

The generalization of the Localization Dichotomy to non-periodic systems is grounded on the definition of Generalized Wannier Basis (GWB), which traces back to the pioneering work of Kohn and Onfroy \cite{Kohn} and Kivelson \cite{Kivelson}, and has been formulated in clear mathematical terms by A.~Nenciu and G.~Nenciu \cite{Nenciu3} (see Definition \ref{GWB} in the next Section). 
In \cite{Marcelli}, the existence of a localized GWB with finite second moment for the position operator 
has been conjecturally related to the vanishing of the Chern marker \cite{Avron2, Bellissard}. Moreover, the authors proved that the algebraic decay of a GWB implies the vanishing of the Chern marker, however for a degree of localization $s$ far from the optimal threshold $s\geq 1$.  (With this notation, as in Definition \ref{GWB}, $s=1$ corresponds to a finite second moment for the position operator). 
An almost-optimal threshold has been later reached by Lu and Stubbs \cite{Lu, Lu2}, who proved a similar result for every $s >1$ in dimension $d=2$.

A more general concept of Wannier basis has been proposed by Prodan \cite{Prodan}, and has been later associated to the name Ultra Generalized Wannier Basis (UGWB). While the latter bases might be useful from the computational viewpoint as they can be constructed for any magnetic Schr\"{o}dinger operator, the price to pay for such a generality is that UGWBs are not able to distinguish between ordinary and Chern insulators, as recently shown in \cite{Moscolari3}.     

Following an independent line of thought, several researchers approached the same problem from the viewpoint of $C^*$-algebras and Hilbert modules \cite{Ludewig, Bourne, Large scale}.  
In particular, the latter paper suggests that the topological triviality of a projection may be better investigated 
from the viewpoint of \virg{coarse geometry}, \ie by considering the corresponding class in the $K$-theory of the Roe $C^*$-algebra of the position space. Indeed, the authors of \cite{Large scale} proved that the existence of a 
Wannier basis which is $s$-localized for all $s>0$ in the sense of Definition \ref{GWB}, with the $L^{2}$-decay condition replaced by the $L^{\infty}$-one,  implies the triviality of the corresponding projection in the $K_0$-group of the Roe $C^*$-algebra, in a very general setting. 

In this paper, we combine ideas and methods from both lines of thoughts. Inspired by \cite{Large scale}, we prove that, when the position space is $\R^d$ for $d\geq 1$, the existence of a GWB with $L^2$-localization parameter $s>\frac{d}{2}$ implies the triviality of the corresponding projection in the $K_0$-group of the Roe $C^*$-algebra (Theorem \ref{original}). 
While our setting is much restricted in comparison with \cite{Large scale}, where general Riemannian manifolds are considered, our hypotheses are weaker, as only $s$-localization for $s>\frac{d}{2}$ is assumed. 
In comparison with Lu and Stubbs \cite{Lu, Lu2}, we obtain the same almost-optimal threshold $s >1$ in $\R^2$, but we have a result which is also valid in any dimension $d\geq 1$. Moreover, the structure of the proof emphasizes the role of the Roe $C^*$-algebra to describe the \virg{large scale} geometric properties which are relevant to physics, as suggested by Ludewig and Thiang.   Clearly, it remains to clarify under which conditions the triviality in the sense of Roe proved in this paper is equivalent to the Chern triviality considered in \cite{Lu2, Marcelli, Moscolari2}. While we are confident that the two notions agree for $\R^2$, we postpone a general analysis of this equivalence problem to future investigations. 

{\bf Acknowledgements.} We are grateful to G. Marcelli and D. Monaco for interesting discussions on related topics. We gratefully acknowledge financial support from MUR–Italian Ministry of University and Research and Next Generation EU (PRIN project 2022AKRC5P “Interacting Quantum Systems: Topological Phenomena
and Effective Theories”) and from the National Quantum Science and Technology Institute (PNRR MUR project PE0000023-NQSTI).

\section{Preliminaries and main result}

\subsection{The Roe C*-algebra: a quick review}
 Consider the Hilbert space $L^2(\R^d)$, $d\geq 1$, and let $C_c(\R^d)$ be the set of compactly supported continuous functions on $\R^d$. To each $f\in C_c(\R^d)$ one associates a bounded multiplication operator on $L^2(\R^d)$, namely
\begin{equation*}
    (f\ph)(x)=f(x)\ph(x), \quad \forall \ph\in L^2(\R^d).
\end{equation*} 
 We also denote by $\mathcal{B}(L^2(\R^d)$ the set of bounded linear operators on $L^2(\R^d)$ and by $\di$ the distance between sets, which is given by \begin{equation*}
     \di(A,B):=\inf_{x\in A,y\in B}\norm{x-y}, \quad \forall A,B\subseteq \R^d.
\end{equation*}
In order to define the Roe $C^*$-algebra of $\R^d$ we begin with the next definition, following
\cite{Higson Roe, Roe, Coarse geometry, Large scale}.
\begin{dfn} \label{loc comp fin prop}
	Consider $T\in\mathcal{B}(L^2(\R^d))$. We say that
	\begin{enumerate}[label=(\roman*)]
		  \item T is locally compact if $Tf, \ fT$ are compact operators on $L^2(\R^d)$, $\forall f\in C_c(\R^d)$;
		  \item T has finite propagation if there exists $R>0$ such that $fTg=0$, whenever \\ $\di(\hbox{supp}(f),\hbox{supp}(g))>R$, $\forall f,g\in C_c(\R^d)$.
	\end{enumerate}
We define $\C[\R^d]:=\set{T\in \mathcal{B}(L^2(\R^d)): T \text{ is locally compact with finite propagation}}$.
\end{dfn}
One may think of finite propagation operators as the continuum analogous of a matrix, with discrete indices replaced by $(x,y)\in\R^d\times \R^d$, with vanishing entries outside a neighborhood of the diagonal.

\begin{example}[Convolution operator with integral kernel in $C_c(\R^d)$] \,\\
	For $k\in C_c(\R^d)$, define the operator $K\in \BR$ acting on $L^2(\R^d)$ by convolution, \ie
	\begin{equation*}
		  (K\ph)(x)=\int_{\R^d}\di y \, k(x-y)\ph(y), \quad \ph \in L^2(\R^d).
	\end{equation*}
The operator $K$ is locally compact, because for every $f\in C_c(\R^d)$, $Kf,fK$ are still integral operators with compactly supported, continuous integral kernels. To see that $K$ has also finite propagation define $R:=\hbox{sup}\{\norm{x}: k(x)\neq 0\}$. Then, if $f,g\in C_c(\R^d)$, $\di(\hbox{supp}(f),\hbox{supp}(g))>R$, we have that
\begin{equation*}
	(fKg\ph)(x)=\int_{\R^d}\di y \, f(x)k(x-y)g(y)\ph(y)=0, \quad \forall\ph \in L^2(\R^d), x \in \R^d,
\end{equation*}
because if $x\in\hbox{supp}(f)$, $y\in\hbox{supp}(g)$, then $\norm{x-y}\geq \di(\hbox{supp}(f),\hbox{supp}(g))>R$ and thus $k(x-y)=0$. In particular $K\in\C[\R^d]$.
 \endrmk
\end{example}

\begin{lemma} \label{* subalgebra}
	$\C[\R^d]$ is a $*$-subalgebra of $\mathcal{B}(L^2(\R^d))$.
\end{lemma}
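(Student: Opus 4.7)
The plan is to verify the four closure axioms---sum, scalar multiple, product, and adjoint---separately, each splitting into a local-compactness check and a finite-propagation check. Closure under addition and scalar multiplication is immediate: for local compactness, sums and scalar multiples of compact operators are compact, and for propagation one just takes the maximum of the two radii. The adjoint is handled via the identity $(fTg)^* = \bar g\, T^* \bar f$: local compactness of $T^*$ reduces to that of $T$ through $fT^* = (T\bar f)^*$ and $T^*f = (\bar f T)^*$, while finite propagation of $T^*$ with the same radius follows from $\di(\supp \bar f, \supp \bar g) = \di(\supp f, \supp g)$.

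The only genuine work is in showing that $ST \in \C[\R^d]$ whenever $S, T \in \C[\R^d]$. Local compactness is still painless: $STf = S(Tf)$ is a bounded operator composed with a compact one, hence compact, and symmetrically $fST = (fS)T$. For finite propagation, the natural guess is that the propagation radius of $ST$ is $R_S + R_T$, where $R_S, R_T$ are the radii of $S, T$. The obstacle is that the hypotheses on $S$ and $T$ are formulated purely in terms of multiplication by $C_c(\R^d)$ functions, so to combine the two one must insert a $C_c$-cutoff between them; this cutoff construction is what I expect to be the main technical step.

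The key enabling fact is the following: if $T$ has propagation $R_T$ and $g \in C_c(\R^d)$, then for every $\ph \in L^2(\R^d)$ the function $T(g\ph)$ vanishes almost everywhere outside the closed $R_T$-neighborhood $U_{g,R_T}$ of $\supp g$. This is proved by testing against arbitrary $f \in C_c(\R^d)$ whose support lies in a compact set disjoint from $U_{g,R_T}$ and invoking the finite-propagation condition for $T$ applied to $fTg$. Consequently, if $h \in C_c(\R^d)$ is chosen with compact support and $h \equiv 1$ on $U_{g,R_T}$, then $hTg = Tg$ as operators on $L^2(\R^d)$.

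Given $f, g \in C_c(\R^d)$ with $\di(\supp f, \supp g) > R_S + R_T$, I set $\delta := \di(\supp f, \supp g) - R_S - R_T > 0$ and use Urysohn's lemma to choose $h \in C_c(\R^d)$ equal to $1$ on the closed $(R_T + \delta/3)$-neighborhood of $\supp g$ and supported inside the closed $(R_T + 2\delta/3)$-neighborhood of $\supp g$. Then $hTg = Tg$ by the preceding step, while the triangle inequality yields $\di(\supp f, \supp h) \geq \di(\supp f, \supp g) - (R_T + 2\delta/3) > R_S$, so $fSh = 0$ by finite propagation of $S$. Combining, $fSTg = (fSh)(Tg) = 0$, and $ST$ has propagation radius $R_S + R_T$. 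The only real obstacle is this cutoff construction; once the essential-support statement for $Tg$ is isolated, the rest reduces to triangle-inequality bookkeeping.
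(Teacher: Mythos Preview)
Your argument is correct. The easy closure checks (sum, scalar, adjoint, local compactness of the product) match the paper's treatment, which simply declares them trivial. The substantive difference lies in the finite-propagation argument for the product.

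The paper inserts a full partition of unity $\{\psi_n\}$ subordinated to an $\eps$-cover $\{B_\eps(x_n)\}$ between the two factors, writing $fTSg = fT\big(\sum_{n\in N_1\cap N_2}\psi_n\big)Sg$, and then argues that if the product is nonzero some $\psi_n$ must lie within $R_1$ of $\supp f$ and within $R_2$ of $\supp g$; the triangle inequality then gives $\di(\supp f,\supp g)\leq R_1+R_2+2\eps$, and one lets $\eps\to 0$. Your route instead first isolates the support lemma ``$Tg$ maps into functions essentially supported in the closed $R_T$-neighborhood of $\supp g$'' and then inserts a single Urysohn cutoff $h$ that is identically $1$ on that neighborhood. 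This avoids both the partition of unity and the limiting $\eps$-argument, at the cost of proving the support lemma (which is itself a short duality/$\sigma$-compactness argument). Both approaches yield the same propagation bound $R_S+R_T$; yours is arguably more self-contained and makes explicit a reusable fact about where finite-propagation operators send compactly supported inputs, while the paper's partition-of-unity trick is the device that generalizes most readily to abstract metric-measure settings.
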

\begin{proof}
    The only non-trivial part to prove is that, if $T,S\in\C[\R^d]$ then $TS$ is also in $\C[\R^d]$. Indeed, if $T,S$ are locally compact their composition is too, because the compact operators form an ideal in $\mathcal{B}(L^2(\R^d))$. Now suppose $T,S$ have finite propagation $R_1,R_2$, respectively. Fix $\eps>0$, let $\set{x_n}_{n\in\N}\subseteq \R^d$ be a dense subset and let $\{\psi_n\}_{n\in\N}\subseteq C_c(\R^d)$ be a partition of unity subordinated to $\set{B_{\eps}(x_n)}_{n\in\N}$, where $B_{\eps}(x_n):=\set{x\in \R^d: \norm{x-x_n}<\eps}$. Then, if $f,g\in C_c(\R^d)$, we have that
    \begin{equation*}
		fTSg=fT\(\sum_{n\in N_1\cap N_2}\psi_n\)Sg,
    \end{equation*} 
    with $N_1\!:=\!\{n: \di(\hbox{supp}(\psi_n), \hbox{supp}(f))\leq R_1 \}$, $N_2\!:=\!\{n: \di(\hbox{supp}(\psi_n), \hbox{supp}(g))\leq R_2 \}$.
    Suppose $fTSg\neq 0$. Then there exists $n\in N_1\cap N_2$ and, by Weierstrass Theorem, there exist $z,z'\in\supp(\psi_n),x\in\supp(f), y\in\supp(g)$ such that 
    \begin{equation*}
        \di(\supp(f),\supp(\psi_n))=\norm{x-z}\leq R_1, \quad \di(\supp(g),\supp(\psi_n))=\norm{y-z'}\leq R_2.
    \end{equation*}
    Therefore, since $\supp(\psi_n)\subseteq B_{\eps}(x_m)$ for some $m\in\N$, we obtain
    \begin{equation*}
	\di(\supp(f),\supp(g))\leq \norm{x-y}\leq \norm{x-z}+\norm{z-z'}+\norm{z'-y}\leq R_1+R_2+2\eps.
    \end{equation*}
    By the arbitrarity of $\eps$, it follows that $TS$ has finite propagation $R_1+R_2$.
\end{proof} 
\begin{dfn}
    The Roe $C^*$-algebra $C^*(\R^d)$ is the closure in $\mathcal{B}(L^2(\R^d))$ of $\C[\R^d]$.
\end{dfn}
Observe that $C^*(\R^d)$ is not a unital $C^*$-algebra. Indeed, suppose that the identity operator is the norm-limit of a sequence of locally compact operators $\set{T_n}_{n\in\N}$. Then, for all $f\in C_c(\R^d)$ we would have
\begin{equation*}
    \norm{f-T_nf}\leq \norm{\Id-T_n}\norm{f}\xrightarrow{n\to\infty}0.
\end{equation*}
As a consequence the multiplication operator by $f$ on $L^2(\R^d)$ would be compact for every $f\in C_c(\R^d)$, which is absurd. 

\subsection{Generalized Wannier bases and Roe triviality}

In this section we introduce the concept of generalized Wannier basis. We first consider a discrete set with some uniformity property, which has the role of modelling a non-crystalline material.
\begin{dfn}
    Let $\D$ be a subset of $\R^d$. We say that $\D$ is $r$-uniformly discrete if there exists $r>0$ such that the set $\set{B_r(\ga)}_{\ga\in\D}$ is composed of mutually disjoint balls.
\end{dfn}

\begin{dfn}\label{GWB}
    Let $\Hi$ be a closed subspace of $L^2(\R^d)$. Given $s>0$ we say that $\Hi$ admits a $s$-localized generalized Wannier basis (GWB) if there exists a uniformly discrete set $\D$ and a set $\set{\psi_{\ga}}_{\ga\in\D}\subseteq L^2(\R^d)$ such that
    \begin{enumerate}[label=(\roman*)]
        \item $\set{\psi_{\ga}}_{\ga\in\D}$ is an orthonormal basis for $\Hi$;
        \item  
         the functions $\psi_{\ga}$ are uniformly $s$-localized around $\D$, \ie there exists $M>0$ such that 
        \begin{equation*}
             \int_{\R^d}\di x \, \langle x-\ga \rangle^{2s}|\psi_{\ga}(x)|^2\leq M, \quad \forall \ga\in\D,
        \end{equation*}
    \end{enumerate}
    where we used the standard notation $\langle x\rangle=(1+\norm{x}^2)^{\frac{1}{2}}$.
\end{dfn}

We now consider the topological $K$-theory of the $C^*$-algebra $C^*(\R^d)$. In order to proceed, we first give the following definition of equivalence between projections.

\begin{dfn} 
    Let $\A$ be a $C^*$-algebra. Two projections $P,Q\in \A$ are said to be Murray-von Neumann (MvN) equivalent if there exists $V\in \A$ such that $P=VV^*$ and $Q=V^*V$.
\end{dfn}
By \cite[Cor. 2.20]{Coarse geometry} the elements in $K_0(C^*(\R^d))$ can be represented as formal differences of MvN equivalence classes of projections in $C^*(\R^d)$.
In particular, to each projection $P\in C^*(\R^d)$ we can associate an element $[P]\in K_0(C^*(\R^d))$ which is given by the formal difference of the MvN equivalence classes of $P$ with the zero projection in $C^*(\R^d)$. If such a class defines a trivial element in $K_0(C^*(\R^d))$, we say that the projection $P$ is Roe trivial.

\subsection{Main result}

We are now in position to state our main result, namely that the existence of a $s$-localized GWB for $s>\frac{d}{2}$ implies the Roe triviality of the corresponding projection. 
\begin{thm}\label{original}
    Let $\Hi$ be a closed subspace of $L^2(\R^d)$ admitting a $s$-localized GWB, $\set{\psi_{\ga}}_{\ga\in\D}$, with $s>\frac{d}{2}$, where $\D$ is a uniformly discrete subset of $\R^d$. \newline Then the projection operator $P_{\Hi}=\sum_{\ga\in\D}\ket{\psi_{\ga}}\bra{\psi_{\ga}}$ defines a trivial element in $K_0(C^*(\R^d))$, namely $[P_ {\Hi}]=0$ in $K_0(C^*(\R^d))$.
\end{thm}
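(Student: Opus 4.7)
The plan is to reduce $[P_\Hi]=0$ to the triviality of a much simpler reference projection, by constructing an explicit Murray--von Neumann equivalence inside $C^*(\R^d)$. Using the uniform discreteness of $\D$ with separation $r>0$, set $\chi_\gamma := |B_r|^{-1/2}\mathds{1}_{B_r(\gamma)}$; the disjointness of the balls makes $\set{\chi_\gamma}_{\gamma\in\D}$ an orthonormal family, and the projection $Q:= \sum_{\gamma\in\D}\ket{\chi_\gamma}\bra{\chi_\gamma}$ lies in $\C[\R^d]$ (propagation at most $2r$, and $Qf$ is a finite sum of rank-one operators for every $f\in C_c(\R^d)$, hence compact). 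I then take as candidate intertwiner
\[
V := \sum_{\gamma\in\D}\ket{\psi_\gamma}\bra{\chi_\gamma},
\]
for which the orthonormality of $\set{\psi_\gamma}$ and of $\set{\chi_\gamma}$ gives $V^*V=Q$ and $VV^*=P_\Hi$ by direct computation. Once $V\in C^*(\R^d)$ is verified, this yields $[P_\Hi]=[Q]$ in $K_0(C^*(\R^d))$ and the theorem reduces to showing $[Q]=0$.

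The verification that $V\in C^*(\R^d)$ is the technical core of the argument, and it is the only place where the hypothesis $s>\tfrac{d}{2}$ is essential. I would approximate $V$ by $V^{(R)}:=\sum_\gamma \ket{\eta_R(\cdot-\gamma)\psi_\gamma}\bra{\chi_\gamma}$, where $\eta_R\in C_c(\R^d)$ is a cutoff equal to $1$ on $B_R(0)$ and supported in $B_{2R}(0)$; each $V^{(R)}$ has propagation $\leq 2R+r$ and is locally compact, hence in $\C[\R^d]$. Exploiting $\norm{V-V^{(R)}}^2=\norm{(V-V^{(R)})(V-V^{(R)})^*}$ and the orthonormality of $\set{\chi_\gamma}$, the right-hand side becomes
\[
\Bigl\|\sum_{\gamma\in\D}\ket{\psi_\gamma-\eta_R(\cdot-\gamma)\psi_\gamma}\bra{\psi_\gamma-\eta_R(\cdot-\gamma)\psi_\gamma}\Bigr\|,
\]
and evaluating the quadratic form on $\varphi\in L^2(\R^d)$, applying Cauchy--Schwarz with the weight $\langle x-\gamma\rangle^s$ followed by the $s$-localization bound, yields
\[
\norm{V-V^{(R)}}^2\leq M\,\sup_{x\in\R^d}\sum_{\gamma\in\D:\,\norm{x-\gamma}\geq R}\langle x-\gamma\rangle^{-2s}.
\]
A dyadic decomposition into spherical shells, combined with the fact that the $r$-uniform discreteness of $\D$ limits the count of $\gamma$'s in a shell of radius $kr$ to $O(k^{d-1})$, bounds the right-hand side by $C\,R^{d-2s}$, which vanishes as $R\to\infty$ precisely because $2s>d$. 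Hence $\norm{V-V^{(R)}}\to 0$, so $V$, and therefore $P_\Hi=VV^*$, lies in $C^*(\R^d)$.

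It then remains to show $[Q]=0$ in $K_0(C^*(\R^d))$, a purely coarse-geometric statement no longer involving the Wannier functions; this is the step I expect to be the principal remaining obstacle. The strategy is an Eilenberg swindle: one constructs, possibly in the stabilization $C^*(\R^d)\otimes\mathcal{K}$, an infinite orthogonal sum $Q_\infty$ of MvN-equivalent copies of $Q$, together with a partial isometry $W$ of uniformly bounded propagation satisfying $W^*W=Q_\infty$ and $WW^*=Q_\infty-Q$, so that $[Q_\infty]=[Q_\infty]-[Q]$ forces $[Q]=0$. The delicate point is producing $Q_\infty$ in the Roe algebra without requiring geometrically disjoint translates of $\D$ in $\R^d$, which may fail when $\D$ is dense; one resolves this either by first MvN-reducing $Q$ to a projection supported near a sparse subset $\D_0\subseteq\D$ that admits infinitely many disjoint translates, or by absorbing the extra copy algebraically via a Cuntz-type isometry in the stabilized Roe algebra, adapting the coarse-geometric ideas of Ludewig--Thiang to the present setting.
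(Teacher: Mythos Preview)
Your Murray--von Neumann construction is essentially the paper's: the paper takes $V=\sum_{\gamma}\ket{\phi_\gamma}\bra{\psi_\gamma}$ with $\{\phi_\gamma\}$ an arbitrary extremely localized GWB (your $\chi_\gamma$ is a concrete instance), so your $V$ is the adjoint of theirs, which is immaterial for the equivalence. Your norm estimate is also the same computation in disguise: by passing to $(V-V^{(R)})(V-V^{(R)})^*$ and using orthonormality of $\{\chi_\gamma\}$ you land on $\sum_\gamma|\langle(1-\eta_R)\psi_\gamma,\varphi\rangle|^2$, which is exactly what the paper obtains by computing $\norm{(V-V^R)\varphi}^2$ and using orthonormality of $\{\phi_\gamma\}$; the subsequent Cauchy--Schwarz with weight $\langle x-\gamma\rangle^s$ and the tail bound $\sup_x\sum_{\gamma:\norm{x-\gamma}\geq R}\langle x-\gamma\rangle^{-2s}\leq C(1+R)^{d-2s}$ match line for line. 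The paper isolates the tail bound as a separate lemma proved by integral comparison rather than your shell counting, but the content is the same.

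The genuine gap is the step $[Q]=0$, which you rightly flag as the remaining obstacle but do not close. The paper does not attempt an Eilenberg swindle directly in $C^*(\R^d)$ and never confronts the question of disjoint translates of $\D$. Instead it quotes two external results: first, $K_0(C^*(\R^d_\pm))=0$ for the upper and lower half-spaces (Roe, \emph{Index Theory, Coarse Geometry, and Topology of Manifolds}, Prop.~9.4), which is where the swindle actually lives, via flasqueness of $\R^d_\pm$ under the one-sided shift $x\mapsto x+te_d$; second, Ludewig--Thiang's Prop.~2.5, which transfers this half-space triviality to the $K_0$-class of any extremely localized projection in $C^*(\R^d)$. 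Your swindle intuition is therefore correct, but the clean execution is to pass to a half-space first; once you do, your worry about $\D$ being too dense to admit disjoint translates disappears, since the shift direction provides unbounded room irrespective of $\D$. Replacing your last paragraph with these two citations (or reproducing their short proofs) completes the argument.
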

Observe that, as mentioned in the Introduction, the Theorem holds for any dimension. In particular, when $d=2$ we almost reach the threshold value $s_*=1$, under which the $s$-localization of a GWB is not expected to imply topological triviality of $P_{\Hi}$. The same threshold appears also in \cite{Lu2}, where the authors proved instead the Chern triviality of the projection $P_{\Hi}$.

\subsection{Application to a physically relevant model}
To show a possible application of our Theorem, we consider here a deformation ``á la Gubanov" of the two-dimensional Kronig-Penney model. The latter is given by the following one-particle Schr\"{o}dinger operator acting in $L^2(\R^2)$
\begin{equation*}
    H_{0}=-\Delta+V_0,
\end{equation*}
where $V_0$ is a two-dimensional $\Z^2$-periodic well potential, \ie
\begin{equation*}
    V_0(x)=v_0\sum_{n\in\Z^2}\chi_{\mathcal{R}(n)}(x),
\end{equation*}
where $v_0\in\R$ is the height of the barrier and  $\mathcal{R}(n):=\mathcal{R}(0)+n=[n_1-\frac{a}{2},n_1+\frac{a}{2}]\\ \times [n_2-\frac{a}{2},n_2+\frac{a}{2}]$, with $0<a<1$ a length parameter. It is known that, for suitable height and length parameters, the spectrum of $H_0$ admits gaps (\cite{Kronig Penney}, for numerical simulations see \cite{Numerics}). We denote by $\sigma_0$ one of the connected components of the spectrum (spectral island), for which it holds
\begin{equation*}
    \dist(\sigma(H_0),\sigma_0)>0.
\end{equation*}
We also denote by $P_0=\chi_{\sigma_0}(H_0)$ the spectral projection onto $\sigma_0$. Since the Hamiltonian $H_0$ is time-reversal symmetric, by \cite[Thm. 1]{Panati2007} and \cite[Thm. 3.5]{Monaco}, there exists an orthornormal (Wannier) basis $\set{w_n}_{n\in\Z^2}=\set{w_0(\cdot-n)}_{n\in\Z^2}$ of $\Ran P_0$ which is exponentially localized, \ie there exists $\alpha_0>0$ such that 
\begin{equation*}
    \int_{\R^2}\di x \, e^{2\alpha_0\norm{x-n}}|w_n(x)|^2< \infty, \quad \forall n\in\Z^2,
\end{equation*}
where we assumed, for simplicity, no degeneracy of the localization centers $n\in \Z^2$. Clearly $\set{w_n}_{n\in\Z^2}$ is also a $s$-localized GWB around $\Z^2$, for each $s>0.$

In order to describe possible deformations of the system, we consider a vector-valued function $g\in C^3(\R^2;\R^2)$ satisfying
\begin{equation*}
    \xi:=\sup_{x\in\R^2}\max_{1\leq i,j,k,\ell\leq 2}\set{|\partial_ig_k(x)|,|\partial_i\partial_j g_k(x)|,|\partial_i\partial_j\partial_{\ell} g_k(x)|}<\infty.
\end{equation*}
We represent the deformed system through the so called quasi-crystalline Gubanov model (\cite{Nenciu, Boutet, Marcelli}), given by the Hamiltonian
\begin{equation*}
    H_g=-\Delta+V_g,
\end{equation*}
where $V_g(x)=V_0(x+g(x))$. By the Hadamard-Caccioppoli Theorem the function $f(x):=x+g(x)$ is globally invertible for $\xi$ small enough, and we denote its inverse by $h(x):=f^{-1}(x).$ For each $\ph\in L^2(\R^2)$, we consider the following operators
\begin{align*}
    & (Y\ph)(x)=J(x)^{\frac{1}{2}}\ph(f(x)),
    & (Y^{-1}\ph)(x)=J(h(x))^{-\frac{1}{2}}\ph(h(x)),
\end{align*}
where $J(x)=|\det Df(x)|=|\det(\Id+Dg(x))|$. By the change of variable formula, it's easy to see that $Y^*=Y^{-1}$, so $Y$ is a unitary operator. By a direct computation one shows that, for $\xi$ small enough
\begin{equation*}
    Y^{-1}H_g Y=H_0+\xi D_{\xi},
\end{equation*}
where $D_{\xi}$ is a $H_0$-bounded operator, \ie there exist $c>0$, $d>0$, independent of $\xi$, such that
\begin{equation*}
    \norm{D_{\xi}\ph}\leq c\norm{H_0\ph}+d\norm{\ph}, \quad \forall \ph\in\mathcal{D}(H_0).
\end{equation*}
Then, by \cite[Thm. VI.5.12]{Kato}, we have that the spectrum of $H_g$ has a spectral island $\sigma_g$ that varies continuously w.r.t the Hausdorff distance, and we denote by $P_g=\chi_{\sigma_g}(H_g)$ its spectral projection onto $\sigma_g.$ As discussed \eg in \cite[Example 2.8]{Marcelli}, thanks to the Combes-Thomas estimate and the Kato-Nagy formula, it is possible to unitarily map the GWB in $\Ran P_0$ to a GWB in $\Ran(Y^{-1}P_gY)$, hereafter denoted by $\set{\psi_n}_{n\in\Z^2}$, without losing its exponential localization around $\Z^2$, \ie there exist $M>0$, $\alpha>0$ such that
\begin{equation*}
    \int_{\R^2}\di x \, e^{2\alpha\norm{x-n}}|\psi_{n}(x)|^2\leq M, \quad \forall n\in\Z^2.
\end{equation*}

The deformed lattice $\D:=h(\Z^2)=\set{\ga\in\R^2:\ga=h(n),\, n\in\Z^2}$ is a $r$-uniformly discrete set,
with $r=(1+2\xi)^{-1}$. By setting
\begin{equation*}
    \ph_{\ga}:=Y\psi_n,\quad \forall \ga=h(n)\in\D,
\end{equation*}
the set $\set{\ph_{\ga}}_{\ga\in\D}$ is a GWB for $\Ran P_g$, which is exponentially localized around $\D$. 
Indeed, notice that for all $x,y\in\R^2$, for $\xi<\frac{1}{2}$, applying the mean value Theorem to each component of $g$, we have 
\begin{equation*}
   \norm{f(x)-f(y)}\geq  (1-2\xi)\norm{x-y}\Longrightarrow \norm{h(x)-h(y)}\leq (1-2\xi)^{-1}\norm{x-y}.
\end{equation*}
Therefore, for all $\ga=h(n)\in\D$, taking $\beta=\alpha(1-2\xi)$ and using the change of variable $x=h(y)$, we obtain
\begin{align*}
    \int_{\R^2}\di x\, e^{2\beta\norm{x-\ga}}|\ph_{\ga}(x)|^2&=\int_{\R^2}\di x\, e^{2\beta\norm{x-h(n)}}J(x)|\psi_n(f(x))|^2 \\
    &=\int_{\R^2}\di y\, e^{2\beta\norm{h(y)-h(n)}}|\psi_n(y)|^2 
    \leq \int_{\R^2}\!\di y\,e^{2\alpha\norm{y-n}}|\psi_n(y)|^2\leq M,
\end{align*}
which proves exponential localization. In particular Theorem \ref{original} applies to $\Hi=\Ran P_g$, since it admits a GWB $\set{\ph_{\ga}}_{\ga\in\D}$ which is also $s$-localized around a $r$-uniformly discrete set $\D$, for all $s>0$. As a consequence we get that $[P_g]=0$ in $K_0(C^*(\R^2))$. 

We stress out that both the Kronig-Penney model and its Gubanov deformation can be generalized to the $\R^d$ case, for any dimension $d\geq 1$. Theorem \ref{original} applies to $d$-generalizations as well.
\section{Proofs}
\subsection{Proof of the main Theorem}
In order to prove our main result, it will be useful to introduce the concept of extremely localized generalized Wannier basis.
\begin{dfn}
    Let $\Hi$ be a closed subspace of $L^2(\R^d)$. We say that $\Hi$ admits an extremely localized GWB if there exists a $r$-uniformly discrete set $\D$ and a set $\set{\phi_{\ga} }_{\ga\in\D}\subseteq L^2(\R^d)$ which is an orthonormal basis for $\Hi$, such that $\hbox{supp}(\phi_{\ga})\subseteq B_r(\ga)$, for all  $\ga \in \D$.
\end{dfn}

A projection operator whose range admits an extremely localized GWB is an element of $C^*(\R^d)$ and its $K$-theory class $[P]$ is trivial, as shown in the next proposition, whose proof is postponed to the next section. 
\begin{prop}\label{extremely localized}
    Let $\Hi$ be a closed subspace of $L^2(\R^d)$. Suppose that $\Hi$ admits an extremely localized GWB $\set{\phi_\ga}_{\ga\in \D}$, where $\D\subseteq \R^d$ is a $r$-uniformly discrete set.  Then 
    \begin{enumerate}[label=(\roman*)]
        \item The associated projection operator $\displaystyle P_{\Hi}:=\sum_{\ga\in\D}\ket{\phi_{\ga}}\bra{\phi_{\ga}}$ belongs to $\C[\R^d]$; \vspace{-3mm}
        \item $[P_ {\Hi}]=0$ in $K_0(C^*(\R^d))$.
    \end{enumerate}
\end{prop}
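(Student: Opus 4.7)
The plan is to treat the two parts in turn: part (i) is a direct verification, while part (ii) is the substantive content and proceeds via an Eilenberg-swindle-type argument.

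For part (i), self-adjointness and idempotence of $P_\Hi$ follow at once from orthonormality of $\{\phi_\ga\}_{\ga\in\D}$. For finite propagation, given $f, g \in C_c(\R^d)$ with $\dist(\supp f, \supp g) > 2r$, each term $\ket{f\phi_\ga}\bra{\phi_\ga}g$ in $fP_\Hi g$ vanishes: a nonzero contribution would require $B_r(\ga)$ to meet both $\supp f$ and $\supp g$, contradicting the distance hypothesis via the triangle inequality. Hence $P_\Hi$ has propagation $\leq 2r$. For local compactness, $fP_\Hi = \sum_\ga \ket{f\phi_\ga}\bra{\phi_\ga}$ has only finitely many nonzero summands, since the compactness of $\supp f$ combined with $r$-uniform discreteness of $\D$ forces only finitely many $\ga$ to satisfy $B_r(\ga) \cap \supp f \neq \emptyset$. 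Thus $fP_\Hi$ is finite-rank and in particular compact; the argument for $P_\Hi f$ is symmetric.

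For part (ii), the strategy is an Eilenberg swindle. I would aim to construct an orthogonal family of projections $\{P_k\}_{k \geq 0} \subseteq \C[\R^d]$, each an extremely-localized GWB projection MvN-equivalent to $P_0 := P_\Hi$ in $C^*(\R^d)$, such that their ``infinite'' sum $P_\infty := \sum_{k \geq 0} P_k$ still belongs to $C^*(\R^d)$ and admits a partial isometry $W \in C^*(\R^d)$ implementing the shift $P_k \mapsto P_{k+1}$, with $W^*W = P_\infty$ and $WW^* = P_\infty - P_\Hi$. Granted such a construction, the classical swindle yields $[P_\infty] = [P_\infty - P_\Hi]$ in $K_0(C^*(\R^d))$, and by additivity $[P_\Hi] = 0$.

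The natural attempt is spatial translation: fix $T \in \R^d$ large and set $\phi_\ga^{(k)}(x) := \phi_\ga(x - kT)$, $P_k := \sum_\ga \ket{\phi_\ga^{(k)}}\bra{\phi_\ga^{(k)}}$. The partial isometries $V_k : \phi_\ga \mapsto \phi_\ga^{(k)}$ lie in $\C[\R^d]$ with propagation $\leq 2r + k\norm{T}$, and the local compactness of $P_\infty$ is routine once joint orthogonality is in hand. The main obstacle is precisely that joint orthogonality of $\{\phi_\ga^{(k)}\}_{\ga,k}$ requires the union $\bigsqcup_{k \geq 0}(\D + kT)$ to be $r$-uniformly discrete, which fails for generic $\D$: if $\D - \D$ has dense closure in some direction (as it does when $\D = \Z^d$, or more generally when $\D$ has full covering density), then for every choice of $T$ there exist indices $(\ga, k) \neq (\ga', k')$ such that $B_r(\ga + kT)$ and $B_r(\ga' + k'T)$ overlap. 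Resolving this is the crux of the proof. I would circumvent it by passing to the stabilized Roe algebra $C^*(\R^d) \otimes \mathcal{K}(\ell^2(\N)) \cong C^*(\R^d)$ and placing the swindle index $k$ along the auxiliary $\ell^2(\N)$-factor, on which the unilateral shift acts; under this identification the orthogonal copies and the implementing isometry become intrinsically available, yielding $[P_\Hi] = 0$ in $K_0(C^*(\R^d))$ without competing with the geometry of $\D$.
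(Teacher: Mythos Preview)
Your part (i) is correct and matches the paper's argument essentially verbatim.

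Your part (ii), however, has a genuine gap in the ``stabilization'' fix. The operators you need for the swindle---namely $P_\infty = P_{\Hi}\otimes 1_{\ell^2(\N)}$ and the shift $W = P_{\Hi}\otimes S$---do \emph{not} lie in $C^*(\R^d)\otimes\mathcal{K}(\ell^2(\N))$, because neither $1_{\ell^2(\N)}$ nor the unilateral shift $S$ is compact. More tellingly, if a swindle along an auxiliary $\ell^2(\N)$-factor could be run inside $A\otimes\mathcal{K}$, the same argument would give $[p]=0$ for \emph{every} projection $p$ in \emph{every} $C^*$-algebra $A$; in particular it would force $K_0(C^*(\R^2))=0$, contradicting the fact that this group is $\Z$ (detecting the Hall/Chern class). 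So the stabilization route cannot work, and the difficulty you correctly diagnosed in the translation approach is not an artifact---it reflects that a swindle on all of $\R^d$ is impossible.

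What makes the statement true is precisely the \emph{extremely localized} structure, which the paper exploits by passing to half-spaces. The paper's proof of (ii) is two lines: decompose $\R^d=\R^d_+\cup\R^d_-$ into closed half-spaces; by a result of Roe one has $K_0(C^*(\R^d_\pm))=0$ (here the Eilenberg swindle \emph{is} legitimate, via translation toward $+\infty$ in the normal direction, since the half-space is flasque); then a partition principle from Ludewig--Thiang says that any finite-propagation projection built from pieces supported in balls---exactly the situation of an extremely localized GWB---has $K_0$-class decomposing through the two half-space algebras, hence is zero. The moral: the swindle must take place in a coarse subspace with room to push to infinity, and the role of extreme localization is to allow $P_{\Hi}$ to be cut along a hyperplane into such pieces.
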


In light of the previous proposition, we are able to explain the strategy of the proof of Theorem \ref{original}: it consists in showing that a projection whose range admits a sufficiently fast decaying GWB is MvN equivalent to a projection onto the span of some extremely localized GWB. In this way, their $K_0$-classes would define the same element in $K_0(C^*(\R^d))$, which is the trivial one by Proposition \ref{extremely localized} (ii), thus proving the thesis of our main Theorem.
\clearpage

To implement this idea, we need the following lemma, which is a specialization to our setting of \cite[Prop. 3.4]{Large scale}. Again, the proof is postponed to the next section.
\begin{lemma}\label{cauchy maclaurin}
    Let $\D$ be a $r$-uniformly discrete subset of $\R^d$. Then for every $s>\frac{d}{2}$ there exists $C>0$ such that
    \begin{equation*}
		\sum_{\substack{\ga\in\D \\ \norm{x-\ga}\geq R}}\langle x-\ga\rangle^{-2s}\leq C(1+R)^{d-2s}, \quad \forall x\in \R^d, R\geq 0.
    \end{equation*} 
\end{lemma}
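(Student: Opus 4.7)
The plan is to replace the sum by a comparable integral, exploiting that the $r$-uniform discreteness makes the balls $\set{B_r(\ga)}_{\ga\in\D}$ pairwise disjoint. The elementary estimate underlying the comparison is that, for every $y\in B_r(\ga)$,
\begin{equation*}
\langle x-y\rangle \;\leq\; \langle x-\ga\rangle + r \;\leq\; (1+r)\langle x-\ga\rangle,
\end{equation*}
where the second inequality uses $\langle x-\ga\rangle\geq 1$. Raising to the $-2s$ power and averaging over $B_r(\ga)$ yields
\begin{equation*}
\langle x-\ga\rangle^{-2s} \;\leq\; \frac{(1+r)^{2s}}{\vol(B_r(0))}\int_{B_r(\ga)}\langle x-y\rangle^{-2s}\,\di y.
\end{equation*}

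Summing over $\ga\in\D$ with $\norm{x-\ga}\geq R$ and exploiting the disjointness of the balls, the left-hand side is dominated by the integral of $\langle x-y\rangle^{-2s}$ over the union $\bigcup_\ga B_r(\ga)$. For $R\geq r$, the reverse triangle inequality places this union inside $\set{y:\norm{x-y}\geq R-r}$, so, by translation invariance and polar coordinates, one finds
\begin{equation*}
\sum_{\substack{\ga\in\D\\ \norm{x-\ga}\geq R}}\langle x-\ga\rangle^{-2s} \;\leq\; C'\int_{R-r}^{+\infty}\frac{\rho^{d-1}}{(1+\rho^2)^s}\,\di\rho,
\end{equation*}
with $C'$ independent of $x$ and $R$. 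The hypothesis $s>\frac{d}{2}$ is exactly what makes this integral finite; a direct comparison of the integrand with $\rho^{d-1-2s}$ shows that it is $O((1+R)^{d-2s})$ as $R\to\infty$.

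To conclude, I would stitch together the two regimes. For $R\geq r+1$, the quotient $(1+R)/(R-r)$ is uniformly bounded, so the previous asymptotic immediately yields $C(1+R)^{d-2s}$ after enlarging the constant. For $R\in[0, r+1]$, the same integral comparison applied with $R=0$ gives a uniform bound on the entire sum $\sum_{\ga\in\D}\langle x-\ga\rangle^{-2s}$; since $(1+R)^{d-2s}$ is bounded below by the positive constant $(2+r)^{d-2s}$ on this range (recall $d-2s<0$), the desired bound persists after a further enlargement of $C$.

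The only real subtlety is the apparent singularity at $R=r$ in the integral comparison: the natural output $(R-r)^{d-2s}$ blows up as $R\to r^+$, so the range $R\leq r+1$ must be handled separately via the uniform bound on the full sum and then glued to the large-$R$ estimate. Everything else is a routine polar-coordinate computation, and the $r$-uniform discreteness enters only once, through the disjointness of the balls $B_r(\ga)$.
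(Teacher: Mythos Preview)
Your proof is correct and follows essentially the same route as the paper's: bound each term by an average over a ball around $\ga$, use the disjointness of these balls (from $r$-uniform discreteness) to dominate the sum by an integral over an exterior region, and finish with polar coordinates and the hypothesis $s>\frac{d}{2}$. The only differences are cosmetic---you work directly with the $r$-balls and the estimate $\langle x-y\rangle\leq(1+r)\langle x-\ga\rangle$, whereas the paper introduces an auxiliary $\eps<\min\{1,r,R\}$ and the comparison $(1+\norm{x-\ga})\geq(1+\norm{x-y}-\eps)$---and your explicit split into $R\geq r+1$ versus $R\in[0,r+1]$ is in fact a bit more careful about small $R$ than the paper's argument.
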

Observe that the hypothesis $s>\frac{d}{2}$ is necessary to guarantee the convergence of the series.

Now we are going to prove the key result of the article, following the strategy we have just described.
\begin{prop} \label{MVN}
    Let $\Hi$ be a closed subspace of $L^2(\R^d)$ admitting a $s$-localized GWB, $\set{\psi_{\ga}}_{\ga\in\D}$, with $s>\frac{d}{2}$, where $\D$ is a $r$-uniformly discrete subset of $\R^d$. \newline Then for every closed Hilbert subspace $\widetilde{\Hi}$ admitting an extremely localized GWB $\set{\phi_{\ga}}_{\ga\in\D}$ there exists $V\in C^*(\R^d)$ such that
    \begin{align*}
	P_{\Hi}&:=\sum_{\ga\in\D}\ket{\psi_{\ga}}\bra{\psi_{\ga}}=V^*V, & P_{\widetilde{\Hi}}&:=\sum_{\ga\in\D}\ket{\phi_{\ga}}\bra{\phi_{\ga}}=VV^*.
    \end{align*}
    As a consequence, we have that $P_{\Hi}\in C^*(\R^d)$ and that $P_{\Hi}$ is MvN equivalent to $P_{\tilde{\Hi}}$.
\end{prop}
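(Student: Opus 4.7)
The natural candidate for the intertwining partial isometry is
$$V \;:=\; \sum_{\ga\in\D}\ket{\phi_{\ga}}\bra{\psi_{\ga}},$$
because the algebraic identities follow immediately from the orthonormality of both bases: Bessel's inequality applied to $\{\psi_{\ga}\}$ together with $\|\phi_{\ga}\|=1$ and $\phi_{\ga}\perp\phi_{\ga'}$ for $\ga\neq\ga'$ gives $\|V\|\le 1$, while a direct pairing shows $V^{*}V=P_{\Hi}$ and $VV^{*}=P_{\widetilde{\Hi}}$. The whole content of the proposition is therefore that this $V$ lies in $C^{*}(\R^d)$.

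The plan is to exhibit $V$ as a norm-limit of a sequence in $\C[\R^d]$. For each $R>0$, I would set
$$V_{R} \;:=\; \sum_{\ga\in\D}\ket{\phi_{\ga}}\bra{\chi_{B_{R}(\ga)}\psi_{\ga}}.$$
Each term has range contained in $B_{r}(\ga)$ and picks up only the part of a test function supported in $B_{R}(\ga)$, so if $f,g\in C_{c}(\R^d)$ satisfy $\di(\supp f,\supp g)>R+r$ then every summand in $fV_{R}g$ vanishes; hence $V_{R}$ has finite propagation at most $R+r$. For local compactness, fix $f\in C_{c}(\R^d)$: only finitely many $\ga\in\D$ satisfy $B_{r}(\ga)\cap\supp f\neq\emptyset$ (resp.\ $B_{R}(\ga)\cap\supp f\neq\emptyset$) since $\D$ is uniformly discrete, so $fV_{R}$ and $V_{R}f$ are finite-rank. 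Consequently $V_{R}\in\C[\R^d]$.

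The main estimate, and the heart of the argument, is the norm convergence $V_{R}\to V$. For $\ph\in L^{2}(\R^d)$, orthonormality of the $\phi_{\ga}$ gives
$$\|(V-V_{R})\ph\|^{2}\;=\;\sum_{\ga\in\D}\Big|\int_{\|x-\ga\|\ge R}\overline{\psi_{\ga}(x)}\,\ph(x)\,\di x\Big|^{2}.$$
Inserting the factor $\langle x-\ga\rangle^{s}\langle x-\ga\rangle^{-s}$ and applying Cauchy–Schwarz together with the $s$-localization bound $M$, each summand is majorized by $M\int_{\|x-\ga\|\ge R}\langle x-\ga\rangle^{-2s}|\ph(x)|^{2}\di x$. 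Exchanging sum and integral and applying Lemma \ref{cauchy maclaurin} pointwise in $x$ yields
$$\|(V-V_{R})\ph\|^{2}\;\le\; MC\,(1+R)^{d-2s}\,\|\ph\|^{2},$$
which tends to zero precisely because $s>\tfrac{d}{2}$. Hence $V\in C^{*}(\R^d)$, and the MvN equivalence $P_{\Hi}\sim P_{\widetilde{\Hi}}$ follows; in particular $P_{\Hi}=V^{*}V\in C^{*}(\R^d)$.

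The delicate step is thus the last one: the whole proof hinges on matching the decay rate $s$ of the GWB against the volume growth of the uniformly discrete set $\D$, and the threshold $s>\tfrac{d}{2}$ is exactly what makes the Cauchy–Maclaurin-type series in Lemma \ref{cauchy maclaurin} summable and decaying in $R$. The other ingredients (finite propagation from disjoint supports, local compactness from uniform discreteness) are essentially bookkeeping.
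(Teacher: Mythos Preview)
Your proof is correct and follows essentially the same route as the paper: the same intertwiner $V=\sum_{\ga}\ket{\phi_{\ga}}\bra{\psi_{\ga}}$, the same truncations $V_R$ with $\psi_{\ga}$ cut off to $B_R(\ga)$, the same verification that $V_R\in\C[\R^d]$, and the same key estimate via orthonormality of the $\phi_{\ga}$, Cauchy--Schwarz with the $\langle x-\ga\rangle^{s}$ weight, and Lemma~\ref{cauchy maclaurin}. The only additions are your explicit remark that $\|V\|\le 1$ and the closing commentary, neither of which departs from the paper's argument.
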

\begin{proof}
    Let $\widetilde{\Hi}$ be a closed Hilbert subspace of $L^2(\R^d)$ admitting an extremely localized GWB $\set{\phi_{\ga}}_{\ga\in\D}$. Define the operator $V$ by
    \begin{equation*}
		V:=\sum_{\ga\in\D}\ket{\phi_{\ga}}\bra{\psi_{\ga}}.
    \end{equation*}
    It is clear from the definition that $P_{\Hi}=V^*V$, $P_{\widetilde{\Hi}}=VV^*$. Indeed
    \begin{equation*}
        V^*V=\sum_{\ga,\eta\in\D}\ket{\psi_{\ga}}\langle \phi_{\ga}|\phi_{\eta} \rangle\bra{\psi_{\eta}}=\sum_{\ga,\eta\in\D}\ket{\psi_{\ga}}\bra{\psi_{\eta}}\delta_{\ga,\eta}=P_{\Hi},
    \end{equation*}
    where we used the orthonormality property of $\set{\phi_{\ga}}_{\ga\in\D}$. Analogously, using the fact that $\set{\psi_{\ga}}_{\ga\in\D}$ is an orthonormal basis, we have also that $VV^*=P_{\widetilde{\Hi}}$. In order to show that $V\in C^*(\R^d)$ we want to approximate it with operators $V^R\in\C[\R^d]$, with $R$ tending to infinity. 
    More precisely, for every $R\geq 0$, we define
    \begin{equation*}
	V^R:=\sum_{\ga\in\D}\ket{\phi_{\ga}}\bra{\psi^R_{\ga}},
    \end{equation*}
    where $\psi^R_{\ga}=\chi_{B_R(\ga)} \psi_{\ga}$ is the restriction of $\psi_{\ga}$ to the ball $B_R(\ga)$. Clearly, $V^R\in \C[\R^d]$. Indeed, if $f\in C_c(\R^d)$, we have that
    \begin{align*}
	fV^R&=\sum_{\ga\in\D_1}\ket{f\phi_{\ga}}\bra{\psi^R_{\ga}}, & V^Rf&=\sum_{\ga\in\widetilde{\D}_1}\ket{\phi_{\ga}}\bra{\psi^R_{\ga}\bar{f}},
    \end{align*}
    where $\D_1:=\set{\ga:\hbox{supp}(\phi_{\ga})\cap\hbox{supp}(f)\neq \emptyset}$, $\widetilde{\D}_1:=\set{\ga:\hbox{supp}(\psi^R_{\ga})\cap\hbox{supp}(f)\neq \emptyset}$. Since $\D$ is $r$-uniformly discrete and because $\hbox{supp}(\psi^R_{\ga})\subseteq B_R(\ga)$, $\hbox{supp}(\phi_{\ga})\subseteq B_r(\ga)$, for all $\ga\in\D$, we have that both $\D_1$, $\widetilde{\D}_1$ have only a finite number of elements. Thus both $V^Rf$,$fV^R$ are finite rank operators and are thus compact. Considering now $f,g\in C_c(\R^d)$, it holds that
    \begin{equation*}
	fV^Rg=\sum_{\ga\in\D_1\cap\D_2}\ket{f\phi_{\ga}}\bra{\psi^R_{\ga}\bar{g}},
    \end{equation*}
    where $\D_2:=\set{\ga:\hbox{supp}(\psi^R_{\ga})\cap\hbox{supp}(g)\neq \emptyset}$. If $fV^Rg\neq 0$ then there exists $\ga\in\D_1\cap \D_2$. In particular there exist $x\in B_r(\ga)\cap \hbox{supp}(f)$, $y\in B_R(\ga)\cap \hbox{supp}(g)$. Thus we obtain that
    \begin{equation*}
	\di(\hbox{supp}(f),\hbox{supp}(g))\leq \norm{x-y}\leq \norm{x-\ga}+\norm{y-\ga}\leq r+R.
    \end{equation*}
    So $V^R$ has finite propagation $r+R$. Therefore we only need to show that
    \begin{equation}\label{conv}
	\lim_{R\to \infty}\norm{V-V^R}=0,
    \end{equation}
    which implies that $V\in C^*(\R^d)$.
    In order to prove \eqref{conv}, we consider the operator 
    \begin{equation*}
	V-V^R=\sum_{\ga\in\D}\ket{\phi_{\ga}}\bra{\psi_{\ga}-\psi^R_{\ga}},
    \end{equation*}
    which is an integral operator with integral kernel given by
    \begin{equation}\label{kernelz}
	K^R(x,y)=\sum_{\ga\in\D}\phi_{\ga}(x)\overline{\(\psi_{\ga}(y)-\psi^R_{\ga}(y)\)}, \quad x,y\in\R^d.
    \end{equation}
    It is useful to notice that the sum in \eqref{kernelz} is finite, for fixed $x\in\R^d$. Indeed, if $x\in B_r(\eta)$, $\eta\in\D$, then the summands corresponding to $\ga\neq\eta$ give no contribution to the sum, because the supports of $\phi_{\ga}$ are all disjoint. On the other hand, if $x\notin B_r(\ga)$, for any $\ga\in\D$, then $K^R(x,y)=0$, for all $y\in \R^d$.
    \noindent Now, for any $\ph\in L^2(\R^d)$, we have that
    \begin{align}\label{key}
	\norm{(V-V^R)\ph}^2&=\int_{\R^d}\di x \, \left|\int_{\R^d}\di y \, K^R(x,y)\ph(y)\right|^2 \nonumber \\
	&=\int_{\R^d}\di x \, \left|\int_{\R^d}\di y \, \left[\sum_{\ga\in\D}\phi_{\ga}(x)\overline{\(\psi_{\ga}(y)-\psi^R_{\ga}(y)\)}\right]\ph(y)\right|^2 \nonumber\\
	&=\sum_{\eta\in\D}\int_{B_r(\eta)}\di x \, \left|\int_{\R^d}\di y \, \left[\sum_{\ga\in\D}\phi_{\ga}(x)\overline{\(\psi_{\ga}(y)-\psi^R_{\ga}(y)\)}\right]\ph(y)\right|^2 \nonumber\\
	&=\sum_{\ga\in\D}\int_{B_r(\ga)}\di x \, \left|\int_{\R^d}\di y \, \phi_{\ga}(x)\overline{\(\psi_{\ga}(y)-\psi^R_{\ga}(y)\)}\ph(y)\right|^2 \nonumber\\
	&=\sum_{\ga\in\D} \left|\int_{\R^d}\di y \, \overline{\(\psi_{\ga}(y)-\psi^R_{\ga}(y)\)}\ph(y)\right|^2,
    \end{align}
    where in the third equality we used that the function $x\mapsto K^R(x,y)$ is supported in $\cup_{\eta\in\D}B_r(\eta)$, in the second-to-last one we used the fact that, if $x\in B_r(\eta)$, then $\phi_{\ga}(x)=0$, for all $\ga\neq \eta$, and in the last one we used that $\norm{\phi_{\ga}}=1$, $\forall \ga\in\D$.
    \\
    Then, multiplying and dividing by $\langle y-\ga \rangle^s$ inside the integral and using Cauchy-Schwarz inequality and the property of uniform localization of $\set{\psi_{\ga}}_{\ga\in\D}$, we obtain
    \begin{align*}
	\norm{(V-V^R)\ph}^2&=\sum_{\ga\in\D} \left|\int_{\R^d}\di y \, \overline{\psi_{\ga}(y)}\(1-\chi_{B_R(\ga)}(y)\)\frac{\langle y-\ga\rangle^s}{\langle y-\ga \rangle^{s}}\ph(y)\right|^2 \\
	&\leq \sum_{\ga\in\D}\(\int_{\R^d}\di y \, |\psi_{\ga}(y)|^2\langle y-\ga \rangle^{2s}\)\(\int_{\R^d}\di y \,\frac{|\ph(y)|^2}{\langle y-\ga \rangle^{2s}}\(1-\chi_{B_R(\ga)}(y)\)\) \\
	&\leq M\sum_{\ga\in\D}\int_{\R^d}\di y \, |\ph(y)|^2\langle y-\ga \rangle^{-2s}\(1-\chi_{B_R(\ga)}(y)\)\\
	&=M\int_{\R^d}\di y \,  |\ph(y)|^2\sum_{\ga\in\D}\langle y-\ga \rangle^{-2s}\(1-\chi_{B_R(y)}(\ga)\) \\
	&=M\int_{\R^d}\di y \,  |\ph(y)|^2\sum_{\substack{\ga\in\D \\ \norm{y-\ga}\geq R}}\langle y-\ga\rangle^{-2s}\leq C(1+R)^{d-2s}\norm{\ph}^2,
    \end{align*}
    where in the second-to-last line we used Tonelli's Theorem to exchange the series with the integral, and in the last inequality we used the result of Proposition \ref{cauchy maclaurin}. 
    In conclusion we obtained
    \begin{equation*}
	\norm{V-V^R}^2\leq C(1+R)^{d-2s}\xrightarrow{R \to \infty} 0,
    \end{equation*} 
    because $s>\frac{d}{2}$, and this implies that $V\in C^*(\R^d)$.
\end{proof}
The previous argument is similar to the proof of \cite[Prop. 3.4]{Large scale}, where the auxiliary operator $V$
is replaced with its  adjoint, interchanging the role of bra and ket.
Although this exchange has no impact on the MvN equivalence, it makes a difference when estimating $\norm{(V-V^R)\ph}$, since both the test function and the $s$-localized generalized Wannier functions are now calculated in the same integration variable, namely
\begin{equation*}
    \big((V-V^R)\ph\big)(x)=\int_{\R^d}\di y \Bigg[\sum_{\ga\in\D}\phi_{\ga}(x)\overline{\(\psi_{\ga}(y)-\psi^R_{\ga}(y)\)}\ph(y)\Bigg].
\end{equation*}
This little but crucial modification,  allowed us to bring the series appearing in the expression of $K^R$ outside of the integral in \eqref{key}, thanks to the orthonormality property of $\set{\phi_{\ga}}_{\ga\in\D}$, making the subsequent estimates easier. In view of that, we were able to reach an almost-optimal threshold, \ie $s>\frac{d}{2}$.

\begin{proof}[Proof of Theorem \ref{original}]
    By Proposition \ref{MVN} we know that $P_{\Hi}$ is MvN equivalent to a projection operator $P_{\widetilde{\Hi}}$ whose range admits an extremely localized GWB. As a consequence, they have the same class in $K_0(C^*(\R^d))$. Thus, by Proposition \ref{extremely localized}(ii), we conclude that
    \begin{equation*}
		[P_{\Hi}]=[P_{\widetilde{\Hi}}]=0, \quad \hbox{in} \ K_0(C^*(\R^d)).
    \end{equation*}
\end{proof}
\subsection{Proofs of technical results}

\begin{proof}[Proof of Proposition \ref{extremely localized}]
    (i). Let $f\in C_c(\R^d)$, then $fP_{\Hi}, P_{\Hi}f$ are compact operators. Indeed 
    \begin{align*}
		fP_{\Hi}&=\sum_{\ga\in\D_1}\ket{f\phi_{\ga}}\bra{\phi_{\ga}}, & P_{\Hi}f&=\sum_{\ga\in\D_1}\ket{\phi_{\ga}}\bra{\phi_{\ga}\bar{f}},
    \end{align*}
    where $\D_1:=\set{\ga\in\D: \hbox{supp}(\phi_{\ga})\cap \hbox{supp}(f)\neq \emptyset}$. Since $\D$ is $r$-uniformly discrete and $\hbox{supp}(\phi_{\ga})\subseteq B_r(\ga)$, for all $\ga\in\D$, we have that $\D_1$ has only a finite number of elements. This implies that both $fP_{\Hi}, P_{\Hi}f$ are finite-rank operators and are thus compact. To show that $P_{\Hi}$ has also finite propagation consider $f,g\in C_c(\R^d)$. Then we have
    \begin{equation*}
	fP_{\Hi}g=\sum_{\ga\in\D_1\cap \D_2}\ket{f\phi_{\ga}}\bra{\phi_{\ga}\bar{g}},
    \end{equation*}
    where $\D_2:=\set{\ga\in\D: \hbox{supp}(\phi_{\ga})\cap \hbox{supp}(g)\neq \emptyset}$. If $fP_{\Hi}g\neq 0$ then there exists $\ga\in\D_1\cap \D_2$. Specifically there exist $x\in B_r(\ga)\cap \hbox{supp}(f)$, $y\in B_r(\ga)\cap \hbox{supp}(g)$. Thus it holds that
    \begin{equation*}
	\di(\hbox{supp}(f),\hbox{supp}(g))\leq \norm{x-y}\leq \norm{x-\ga}+\norm{\ga-y}\leq 2r.
    \end{equation*}
    This implies that $P_{\Hi}$ has finite propagation $2r$. \\

    \noindent(ii). Consider the decomposition $\R^d=\R^d_{+}\cup \R^d_{-}$, where $\R^d_{\pm}$ are the upper and lower half euclidean spaces. Then, by \cite[Prop. 9.4]{Roe}, we have that $K_0(\R^d_+)=K_0(\R^d_-)=0$, therefore we obtain the thesis using \cite[Prop. 2.5]{Large scale}.
\end{proof}

\begin{proof}[Proof of Lemma \ref{cauchy maclaurin}]
    Let $x\in\R^d$, $R>0$ and $0<\eps<\hbox{min}\set{1,r,R}$. Then, for every $\ga\in \D$ such that $\norm{x-\ga}\geq R$ and for every $y\in B_{\eps}(\ga)$ we have that
    \begin{align}\label{stime}
		\norm{x-y}&\geq \norm{x-\ga}-\norm{\ga-y}>R-\eps, & 1+\norm{x-\ga}\geq 1+\norm{x-y}-\eps>0.
	\end{align}
	Thus we can estimate the series in the following way
	\begin{align}
		\label{serie stima}
		\nonumber \sum_{\substack{\ga\in\D \\ \norm{x-\ga}\geq R}}\(1+\norm{x-\ga}^2\)^{-s}&= \sum_{\substack{\ga\in\D \\ \norm{x-\ga}\geq R}}\frac{1}{|B_{\eps}(\ga)|}\int_{B_{\eps}(\ga)}\di y \, \(1+\norm{x-\ga}^2\)^{-s} \\
		\nonumber &\leq \sum_{\substack{\ga\in\D \\ \norm{x-\ga}\geq R}}\frac{2^s}{|B_{\eps}(\ga)|}\int_{B_{\eps}(\ga)}\di y \, (1+\norm{x-\ga})^{-2s} \\
		&\leq \frac{2^s}{c \, \eps^d}\sum_{\substack{\ga\in\D \\ \norm{x-\ga}\geq R}}\int_{B_{\eps}(\ga)}\di y \, (1+\norm{x-y}-\eps)^{-2s}, 
    \end{align}
    where in the first inequality we used \ $a^2+b^2\geq \frac{1}{2}(a+b)^2$  and in the last one we used the second inequality in $\eqref{stime}$. Now, since the set $\set{B_r(\ga)}_{\ga\in\D}$\! is composed of disjoint balls and because $\eps<r$, we can estimate the series of integrals in $\eqref{serie stima}$ with the integral over all the space. Furthermore, because of the first inequality in \eqref{stime}, we can restrict the integration to the complement of the ball $B_{R-\eps}(x)$.
    Thus we obtain
    \begin{align*}
	\sum_{\substack{\ga\in\D \\ \norm{x-\ga}\geq R}}\(1+\norm{x-\ga}^2\)^{-s}&\leq \frac{2^s}{c \, \eps^d}\int_{\R^d \setminus B_{R-\eps}(x)} \di y \, (1+ \norm{x-y}-\eps)^{-2s} \\
	&= \frac{2^s}{c \, \eps^d}\int_{\R^d \setminus B_{R-\eps}(x)} \di y \, (1+\norm{x-y})^{-2s}\(1-\frac{\eps}{1+\norm{x-y}}\)^{-2s} \\ \\
	&\leq \frac{2^s(1-\eps)^{-2s}}{c \, \eps^d} \int_{\R^d \setminus B_{R-\eps}(x)} \di y \, (1+\norm{x-y})^{-2s} \\ \\
	&= \frac{2^s(1-\eps)^{-2s}}{\hat{c} \, \eps^d} \int_{R-\eps}^{\infty}\di\rho \, \frac{\rho^{d-1}}{(1+\rho)^{2s}} \\ \\
	&\leq \frac{2^s(1-\eps)^{-2s}}{\hat{c} \, \eps^d(d-2s)}(1+R-\eps)^{d-2s}
	=C(1+R)^{d-2s},
    \end{align*}
    where $\displaystyle C=\frac{2^s(1-\eps)^{-2s}}{\hat{c} \, \eps^d(d-2s)}\(1-\frac{\eps}{1+R}\)^{d-2s}$\!\!\!. The proof of the lemma is concluded.
\end{proof}

\bigskip
\flushleft
{\footnotesize
\begin{tabular}{ll}
(V. Rossi)
&  \textsc{GSSI - Gran Sasso Science Institute} \\
&  Viale Francesco Crispi 7, 67100 L'Aquila, Italy \\
&  {E-mail address}: \href{mailto:vincenzo.rossi@gssi.it}{\texttt{vincenzo.rossi@gssi.it}} \\
\\
(G. Panati)
&  \textsc{Dipartimento di Matematica, \virg{La Sapienza} Universit\`{a} di Roma} \\
&  Piazzale Aldo Moro 2, 00185 Rome, Italy \\
&  {E-mail address}: \href{mailto:panati@mat.uniroma1.it}{\texttt{panati@mat.uniroma1.it}} \\
\\
\end{tabular}
}


\end{document}